\def\BibTeX{{\rm B\kern-.05em{\sc i\kern-.025em b}\kern-.08em
    T\kern-.1667em\lower.7ex\hbox{E}\kern-.125emX}}
\newtheorem{prop}{Proposition}
\newtheorem{rem}{Remark }
\newtheorem{ex}{Example}[section]
\newcommand{\Real}{\mathbb R}
\newcommand{\norm}[1]{\left\Vert#1\right\Vert}
\newcommand{\ra}{\rightarrow}
\newcommand{\set}[1]{\left\{#1\right\}}
\let\subset\subseteq
\title{\LARGE \bf
Formally Verified Physics-Informed Neural Control Lyapunov Functions}
\author{Jun Liu, Maxwell Fitzsimmons, Ruikun Zhou, and Yiming Meng  
\thanks{This research was supported in part by an NSERC Discover Grant and the Canada Research Chairs program.}
\thanks{Jun Liu, Maxwell Fitzsimmons, and Ruikun Zhou are with the Department of Applied Mathematics, Faculty of Mathematics, University of Waterloo, Waterloo, Ontario N2L 3G1, Canada.  Emails: \texttt{j.liu@uwaterloo.ca, mfitzsimmons@uwaterloo.ca, ruikun.zhou@uwaterloo.ca}
        }%
\thanks{Yiming Meng is with  the
Coordinated Science Laboratory, University of Illinois Urbana-Champaign,
Urbana, IL 61801, USA. Email: 
        \texttt{ymmeng@illinois.edu}}%
}
\begin{document}

\maketitle
\thispagestyle{empty}
\pagestyle{empty}

\begin{abstract}
Control Lyapunov functions are a central tool in the design and analysis of stabilizing controllers for nonlinear systems. Constructing such functions, however, remains a significant challenge. In this paper, we investigate physics-informed learning and formal verification of neural network control Lyapunov functions. These neural networks solve a transformed Hamilton–Jacobi–Bellman equation, augmented by data generated using Pontryagin’s maximum principle. Similar to how Zubov's equation characterizes the domain of attraction for autonomous systems, this equation characterizes the null-controllability set of a controlled system. This principled learning of neural network control Lyapunov functions outperforms alternative approaches, such as sum-of-squares and rational control Lyapunov functions, as demonstrated by numerical examples. As an intermediate step, we also present results on the formal verification of quadratic control Lyapunov functions, which, aided by satisfiability modulo theories solvers, can perform surprisingly well compared to more sophisticated approaches and efficiently produce global certificates of null-controllability.
\end{abstract}
\begin{keywords}
Learning, formal verification, neural networks, control Lyapunov function, nonlinear control, stabilization
\end{keywords}

\section{Introduction}

Controlling nonlinear dynamical systems remains one of the most challenging problems in modern control theory. Control Lyapunov functions (CLFs) \cite{artstein1983stabilization,sontag1989universal,sontag2013mathematical,freeman1996control} provide a powerful framework for studying asymptotic stabilization. However, the construction of CLFs is computationally challenging. Traditional approaches, such as sum-of-squares (SOS) \cite{jarvis2003lyapunov} and rational CLFs \cite{doban2015feedback}, can produce valid CLFs, but they remain conservative in capturing the full region of asymptotic (null)  controllability.

Similar to how Zubov's equation \cite{zubov1961methods} captures the domain of attraction for an asymptotically stable autonomous system, it can be extended to systems with inputs \cite{camilli2001generalization,camilli2008control,grune2015zubov} and can provide a characterization of the null-controllability set \cite{camilli2008control}, i.e., the set of initial conditions that can be asymptotically stabilized to an equilibrium point. In this case, Zubov's equation is essentially a transformed Hamilton-Jacobi-Bellman (HJB) equation that maps infinite value to 1, such that the null-controllability set is precisely captured by the 1-sublevel set of the solution to Zubov's equation. Like HJB equations, Zubov's equation is a nonlinear partial differential equation (PDE) that is challenging to solve numerically. In \cite{kang2023data}, a data-driven approach was proposed to solve Zubov's equation using neural networks. In \cite{liu2023towards}, it was shown that adding the PDE loss explicitly in training can lead to a better approximation of the solution to Zubov's equation. They also demonstrated that the physics-informed neural network (PINN) solution to Zubov's equation can be formally verified using satisfiability modulo theories (SMT) solvers to provide rigorous region-of-attraction estimates, outperforming SOS Lyapunov functions \cite{liu2023physics}.

In this paper, we aim to tackle the more challenging case of Zubov's equation for control problems using PINNs. Our approach leverages both trajectory-wise optimization using Pontryagin's maximum principle (PMP) and physics-informed learning to compute CLFs, which are formally verified using SMT solvers. We demonstrate that this approach outperforms traditional methods for computing SOS and rational CLFs. While previous work has combined PMP and neural networks for solving HJB equations \cite{nakamura2021adaptive}, to the best of our knowledge, this paper is the first to explicitly encode the PDE loss for solving the Zubov-HJB equation and provide formally verified CLFs.

Other related work includes the formal synthesis and verification of neural network Lyapunov functions \cite{edwards2024fossil,liu2024lyznet} and the simultaneous learning of formally verified stabilizing controllers \cite{chang2019neural,zhou2022neural,yang2024lyapunov}. Recent research has also addressed learning Lyapunov functions \cite{grune2021computing,gaby2022lyapunov} and CLFs \cite{rego2022learning,grune2023examples} using deep neural networks for potentially high-dimensional systems, relying on training with a Lyapunov inequality loss. However, such computations are typically local in nature (see Example \ref{ex:pendulum} in Section \ref{sec:examples} and examples in \cite{liu2024lyznet}), and the computed Lyapunov functions are not formally verified. The work in \cite{meng2024physics} also used PINNs to solve HJB equations for optimal control, but it relied on an initial stabilizing controller for policy iteration, and the solution is also local.

The remainder of this paper is structured as follows. Section \ref{sec:prelim} introduces the necessary mathematical preliminaries. Section \ref{sec:pinn-clf} presents computation of neural network CLFs using physics-informed learning. Section \ref{sec:verify} details the formal verification process using SMT solvers. Finally, in Section \ref{sec:examples}, we demonstrate the effectiveness of the proposed approach through numerical examples.

\section{Preliminaries}\label{sec:prelim}

\subsection{Nonlinear control system}

Consider a nonlinear control system of the form
\begin{equation}
    \label{eq:sys}
    \dot x  = f(x) + g(x)u,
\end{equation}
where $f:\,\Real^{n}\to\Real^{n}$ and $g:\,\Real^{n}\to \Real^{n\times k}$ are assumed to be locally Lipschitz. It is assumed that $f(0)=0$, i.e., the origin is an equilibrium point in the absence of control input ($u=0$). Given a control signal $u:\,[0,\infty)\to \Real^k$, the corresponding solution to (\ref{eq:sys}) from an initial value $x$ is denoted by $\phi(t;x,u)$, which is defined on a maximal interval of existence $[0,T_{\max}(x,u))$. We refer readers to \cite{sontag1989universal} for basic theory regarding existence and uniqueness of solutions for (\ref{eq:sys}) under admissible control signals.

\subsection{Control Lyapunov function and asymptotic stabilization}

Let $D\subset\Real^n$ be an open set containing the origin. A \textit{control Lyapunov function} $V:\,D\ra \Real$ is a smooth and positive definite function that satisfies 
\begin{equation}
    \label{eq:clf}
\inf_{u\in\Real^k} \left\{\nabla V\cdot (f(x) + g(x)u)\right\}<0
\end{equation}
for all $x\in D\setminus\set{0}$. If $D=\Real^n$ and $V$ is radially unbounded (i.e., $V(x)\ra\infty$ as $\norm{x}\ra\infty$), it is called a global control Lyapunov function. A well-known result in nonlinear control is that the existence of a control Lyapunov function implies the existence of a stabilizing feedback law of the form $u = k(x)$, which satisfies \( k(0) = 0 \), is continuous at the origin, and smooth everywhere else \cite{artstein1983stabilization}. Sontag's formula \cite{sontag1989universal} provides an explicit construction of such a stabilizing feedback law, given as 
\begin{equation}
    \label{eq:sontag}
    k = -\frac{a+\sqrt{a^2+\norm{b}^4}}{\norm{b}^2}b^\top,
\end{equation}
when $\norm{b}\neq 0$, and $k=0$, when $\norm{b}=0$. Here  
\begin{align*}
a(x)&=L_f V(x):=\nabla V(x)\cdot f(x)\\
b(x)&=L_g V(x):=\nabla V(x)\cdot g(x).
\end{align*}
The CLF condition (\ref{eq:clf}) can be equivalently stated as 
\begin{equation}
    \label{eq:clf2}
b(x) = 0 \Longrightarrow a(x)<0
\end{equation}
for all $x\in \setminus\set{0}$, because when $b(x) \neq 0$, (\ref{eq:clf}) can be easily satisfied. In fact, when $b(x) \neq 0$, applying Sontag's formula (\ref{eq:sontag}) gives
\begin{equation}
    \nabla V\cdot (f + g k) = a - (a + \sqrt{a^2+b^4}) = -\sqrt{a^2+b^4} < 0. 
\end{equation}
The subtle part in the analysis of \cite{sontag1989universal} lies in proving that the feedback law $u=k(x)$ given by (\ref{eq:sontag}) is continuous at the origin (under the small control property \cite{sontag1989universal}) and smooth elsewhere, even when $b(x)$ could be zero. 

\subsection{Connections with optimal control}

By inverse optimality \cite{freeman1996inverse}, every CLF is a meaningful value function for an optimal control problem. Consider an infinite-horizon cost
\begin{equation}
    \label{eq:cost}
    J(x,u)=\int_0^\infty \left[ q(x) + u^\top R(x) u \right] dt,
\end{equation}
where $q$ and $R$ are positive definite functions. The infinite-horizon optimal control problem seeks to find a control signal that minimizes this cost.  By a standard dynamic programming argument, one can reduce the problem to the celebrated Hamilton-Jacobi-Bellman (HJB) equation \cite{bardi1997optimal}: 
\begin{equation}
    \label{eq:hjb}
    \sup_{u\in\Real^k} \left\{-\nabla V (f + g u) - u^\top R u - q \right\}= 0. 
\end{equation}
Since $R$ is positive definite, we can rewrite (\ref{eq:hjb}) as
\begin{equation}
    \label{eq:simplified_hjb}
    -\nabla V (f + g k) - k^\top R k - q= 0, 
\end{equation}
where 
\begin{equation}
    \label{eq:k_V}
    k = -\frac12 R^{-1} g^\top \nabla V^\top.
\end{equation}
If $V$ is an optimal value function that satisfies (\ref{eq:simplified_hjb}), then $k$ in (\ref{eq:k}) is an optimal controller. If $V$ is positive definite and continuously differentiable, then by (\ref{eq:simplified_hjb}) and that $q$ is positive definite, $V$ is clearly a CLF. 

Conversely, given a CLF \(V\), while there can be different asymptotically stabilizing control laws, a pointwise min-norm formulation always produces an inverse optimal control law \cite{freeman1996inverse}. An interesting connection between Sontag's formula and the pointwise min-norm formulation is demonstrated in \cite{freeman1996control}, which shows how Sontag's formula can alternatively be viewed from an optimal control perspective.

\subsection{Null controllable set and CLF via Zubov's equation}

Let $\mathcal{U}=L^\infty([0,\infty),\Real^k)$ denote the set of admissible control signals. Similar to the domain of attraction for an autonomous system, the domain of null-controllability for (\ref{eq:sys}) is defined by 
\begin{equation*}
\mathcal{D}:= \set{x\in\Real^n:\, \exists u\in \mathcal{U} \text{ s.t. } \phi(t;x,u)\ra 0 
\text{ as } t\ra \infty}.
\end{equation*}
Zubov's method \cite{zubov1961methods} is well-known for characterizing the domain of attraction for autonomous systems. An extension in \cite{camilli2008control} shows that Zubov's method can also characterize the domain of null-controllability for control systems. The following transformed HJB equation is considered in \cite{camilli2008control}:
\begin{equation}
    \label{eq:zubov-hjb0}
        \sup_{u\in\Real^k} \left\{-\nabla v (f + g u) -  (1 - v)(u^\top R u + q) \right\}= 0, 
\end{equation}
which is essentially a transformation of the HJB equation (\ref{eq:hjb}) with \(v = 1 - \exp(-V)\). This transformation effectively changes the infinite value to 1 and, similar to Zubov's equation defines the domain of attraction \cite{zubov1961methods}, the value function \(v\) now characterizes the null-controllability set as \(\mathcal{D} = \set{x \in \Real^n : v(x) < 1}\). We refer readers to \cite{camilli2008control} for a detailed theoretical investigation of Zubov's equation, in a more general form than (\ref{eq:zubov-hjb0}), and its connections with CLF.

\subsection{Pontryagin's maximum principle} \label{sec:pmp}

While the HJB approach emphasizes closed-loop feedback control, Pontryagin's maximum principle (PMP) provides an alternative perspective, focusing on trajectory-wise optimization. Define the Hamiltonian \( H \) as
\[
H(x, u, \lambda) = q(x) + u^\top R(x) u + \lambda^\top \left[ f(x) + g(x) u \right],
\]
where \( \lambda \in \mathbb{R}^n \) is the adjoint (co-state) vector. According to PMP, the  optimal control \( u^*\) minimizes the Hamiltonian with respect to \( u \), leading to:
\[
\frac{\partial H}{\partial u} = 0 \quad \implies \quad 2 R(x) u^* + g(x)^\top \lambda = 0.
\]
Solving for \( u^* \), we obtain the optimal control law:
\begin{equation}
u^*(t) = -\frac{1}{2} R(x(t))^{-1} g(x(t))^\top \lambda(t).
\label{eq:ustar}
\end{equation}
The adjoint equation is derived as
\begin{align*}
\dot{\lambda}(t) &= -\frac{\partial H}{\partial x} \\
& = -\left[ \frac{\partial q}{\partial x} + \left( \frac{\partial f}{\partial x} \right)^\top \lambda + \left( \frac{\partial g}{\partial x} u^* \right)^\top \lambda + u^{*\top} \frac{\partial R}{\partial x} u^* \right].
\end{align*}
The transversality condition requires the adjoint vector to vanish at infinity, i.e., 
$
\lim_{t \to \infty} \lambda(t) = 0.
$

Let \( V(t) \) represent the accumulated cost-to-go from time \( t \) along the optimal trajectory. It satisfies 
\[
\dot{V}(t) = -\left[ q(x(t)) + u^{*\top}(t) R(x(t)) u^*(t) \right],
\]
Since $V$ represents the cost-to-go and we are interested in characterizing asymptotic stabilization, it is expected that $\lim_{t \to \infty} V(t) = 0.$

The necessary conditions from PMP allow us to formulate optimization algorithms to approximate both the optimal trajectories and the values of \( V \) along them. By selecting a sufficiently large terminal time \( T > 0 \), the boundary conditions for \( \lambda \) and \( V \) are approximately \( V(T) \approx 0 \) and \( \lambda(T) \approx 0 \). At \( t = 0 \), we impose the initial condition \( x(0) = x_0 \). This results in the following two-point boundary value problem (TPBVP):
\begin{equation}
\label{eq:tpbvp}
\begin{aligned}
    \dot{x} &= f(x) + g(x) u^*, \\
    \dot{\lambda} &= - \frac{\partial q}{\partial x} + \left( \frac{\partial f}{\partial x} \right)^\top \lambda + \left( \frac{\partial g}{\partial x} u^* \right)^\top \lambda + u^{*\top} \frac{\partial R}{\partial x} u^*, \\
    \dot{V} &= -\left[ q(x) + u^{*\top} R(x) u^* \right],
\end{aligned}
\end{equation}
where $u^*$ is given by (\ref{eq:ustar}), subject to the boundary conditions:
\begin{equation}
    \label{eq:boundary}
    x(0) = x_0, \quad V(T) = 0, \quad \lambda(T) = 0.
\end{equation}
By approximately solving this TPBVP using off-the-shelf solvers, we can obtain approximate values of $V$ along optimal trajectories. 

\section{Physics-informed computation of maximal neural network control Lyapunov functions} \label{sec:pinn-clf}

\subsection{Zubov-HJB equation}

Given the steady-state HJB equation (\ref{eq:hjb}), similar to \cite{liu2023physics}, we propose a transformation \(\beta: [0,\infty) \rightarrow \Real\) with the property that \(\beta(0) = 0\), \(\beta\) is strictly increasing, and \(\beta(s) \rightarrow 1\) as \(s \rightarrow \infty\). It is easy to check that any \(\beta\) satisfying the differential equation
\begin{equation}\label{eq:beta}
    \dot{\beta} = (1 - \beta)\psi(\beta)
\end{equation}
with \(\psi(s) > 0\) for all \(s \geq 0\), satisfies the above properties. Two straightforward choices of \(\psi\) are \(\psi(s) = \alpha\) and \(\psi(s) = \alpha(1 + s)\) for some $\alpha>0$, which correspond to \(\beta(s) = 1 - \exp(-\alpha s)\) and \(\beta(s) = \tanh(\alpha s)\), respectively. The former is known as the Kruzkov transform. Both are commonly used in the study of Zubov's equation \cite{camilli2001generalization,kang2023data,liu2023physics}.

One potential issue that impedes the computation of the optimal value function \(V\) is that the unique solution to (\ref{eq:simplified_hjb}) is only well-defined for \(x \in \mathcal{D}\) and approaches infinity as \(x\) approaches the boundary of \(\mathcal{D}\) \cite{camilli2008control}. Hence, we aim to compute a new function \(W\) such that \(W(x) = \beta(V(x))\) for all \(x \in \mathcal{D}\) and \(W(x) = 1\) elsewhere. In other words, the null-controllability set will be characterized by \(\set{x \in \Real^n : W(x) < 1}\). The fact that \(W\) is bounded by 1 can potentially make computation easier. By (\ref{eq:beta}), we have
\begin{equation}
    \label{eq:dW}
    \nabla W = (1-W)\psi(W) \nabla V.
\end{equation}
and 
$W$ should satisfy the PDE
\begin{equation}
    \label{eq:zubov_hjb}
    \begin{aligned}
   & -\nabla W (f (1-W)\psi(W) + g \hat k) - {\hat k}^\top R \hat k \\
    &\qquad  - q (1-W)^2\psi^2(W)= 0, 
    \end{aligned}
\end{equation}
where $\hat k = - \frac12 R^{-1} g^\top \nabla W^\top.$ We emphasize that writing the equation in the form of (\ref{eq:zubov_hjb}) to avoid any term involving \(1 - W\) in the denominator is important for numerical stability when solving (\ref{eq:zubov_hjb}) with neural networks. We refer to equation (\ref{eq:zubov_hjb}) as the Zubov-HJB equation. Note that, by (\ref{eq:dW}), the optimal controller $k$ from (\ref{eq:k_V}) is now given as 
\begin{equation}
    \label{eq:k}
    k 
    = -\frac{1}{2(1-W)\psi(W)} R^{-1} g^\top \nabla W^\top.
\end{equation}

\subsection{Physics-informed neural solution to Zubov-HJB equation}\label{sec:pinn-hjb}

We can write (\ref{eq:zubov_hjb}) as a general first-order PDE 
\begin{equation}
    \label{eq:1st_order_PDE}
    F(x,W,DW) = 0,\quad x\in \Omega,
\end{equation}
subject to the boundary condition $W = h$ on $\partial\Omega$.

Let $W_N(x; \theta)$ represent a multi-layer feedforward neural network parameterized by $\theta$. We aim to train $W_N$ to solve (\ref{eq:1st_order_PDE}) by minimizing the loss function 
\begin{align}
    \text{Loss}(\theta) &= \frac{1}{N_c}\sum_{i=1}^{N_c} F(x_i,W_N(x_i;\theta),DW_N(x_i;\theta))^2 \notag \\
    &\quad + \lambda_b \frac{1}{N_b} \sum_{i=1}^{N_b}(W_N(y_i;\theta)-h(y_i))^2,  \notag  \\
    &\quad + \lambda_d \frac{1}{N_d} \sum_{i=1}^{N_d}(W_N(z_i;\theta)-\hat W(z_i))^2. 
    \label{eq:lossV}
\end{align}
Here, $\lambda_b > 0$ and $\lambda_d > 0$ are weighting parameters. In (\ref{eq:lossV}), the points $\set{x_i}_{i=1}^{N_c} \subset \Omega$ are referred to as (interior) collocation points. At these points, we evaluate $W_N(x;\beta)$ and its derivative to compute the mean-square residual error $\frac{1}{N_c} \sum_{i=1}^{N_c} F(x_i, W_N(x_i;\beta), DW_N(x_i;\beta))^2$ for (\ref{eq:zubov_hjb}). Although we use the mean-square error here, any error measure that captures the magnitude of the residual error could be applied. The set $\set{y_i}_{i=1}^{N_b} \subset \partial \Omega$ consists of boundary points where the mean-square boundary error $\frac{1}{N_b} \sum_{i=1}^{N_b} (W_N(y_i; \beta) - h(y_i))^2$ is computed. Additionally, a set of data points $\set{(z_i, \hat{W}(z_i))}_{i=1}^{N_d}$ can be used, where $\set{z_i}_{i=1}^{N_d} \subset \Omega$ and $\set{\hat W(z_i)}$ are approximate ground truth values for $W$ at $\set{z_i}_{i=1}^{N_d}$. In the next subsection, we will discuss in detail how to use PMP, presented in Algorithm \ref{alg:pmp}, to generate data points for solving (\ref{eq:zubov_hjb}).

The idea behind this approach is that by minimizing the residual and boundary losses, the underlying physics of the problem is incorporated, allowing the neural network $W_N$ to closely approximate the (unique) solution of the PDE. The parameters $\theta$ are usually trained using gradient descent methods. Physics-informed neural networks for solving PDEs have recently gained considerable interest among scientific communities \cite{raissi2019physics,lagaris1998artificial}. 

\subsection{Data generation using Pontryagin's maximum principle}

Algorithm \ref{alg:pmp} summarizes how to generate data for the transformed value function $W$ by solving a TPBVP from Pontryagin's maximum principle outlined in Section \ref{sec:pmp}. The procedure is fairly standard, and there are many off-the-shelf TPBVP solvers. Furthermore, computations from different initial conditions can be easily done in parallel. While TPBVP solvers only offer local convergence guarantees, as they use Newton-type algorithms, by setting the tolerance accuracy very high, one can expect such solvers to be accurate on successfully solved cases. In general, it would be numerically challenging, if not impossible, to accurately solve the TPBVP from any initial condition within the null-controllability set. This makes solving the Zubov-HJB equations purely based on PMP data challenging. We demonstrate that with limited data and the loss function (\ref{eq:lossV}) encoding the PDE (\ref{eq:zubov_hjb}), we can effectively solve the Zubov-HJB equation using neural networks. We shall illustrate this in Section \ref{sec:examples} with numerical examples.

\begin{algorithm}[ht]
\caption{Generate PMP Data}\label{alg:pmp}
\KwIn{System dynamics $f$, $g$, cost $q$, $R$, number of samples \( n_{\text{samples}} \), terminal time \( T \), number of time steps \( N \), tolerance \( \text{tol} \)}
\KwOut{Dataset of initial states and value functions}
\BlankLine

\For{\( i = 1 \) to \( n_{\text{samples}} \)}{
    Sample initial state \( x_0 \)\;
    
    Solve TPBVP (\ref{eq:tpbvp}) for \( x(t), \lambda(t), V(t) \) with \( N \) time steps and tolerance \( \text{tol} \), subject to boundary conditions (\ref{eq:boundary})\;
        
    Compute transformed value function from \( W(x_0)=\beta(V(0)) \)\;
    
    Store \( x_0 \) and transformed value function $W(x_0)$ \;
}

\Return Collected data\;
\end{algorithm}

\section{Formal verification of control Lyapunov functions}\label{sec:verify}

\subsection{Formal verification of quadratic control Lyapunov functions}

Consider the control system (\ref{eq:sys}). Let 
$
A = \frac{\partial f}{\partial x}\big\vert_{x=0}
$
and
$B = g(0). 
$
In this subsection, we assume that the pair $(A,B)$ is stabilizable.
Suppose that $Q$ and $R$ are symmetric matrices such that $(A,Q)$ is detectable and $R$ is positive definite. By linear system theory, one can solve the algebraic Riccati equation
\begin{equation}
    \label{eq:are}
    PA+A^\top P - PBR^{-1}B^\top P + Q= 0
\end{equation}
to obtain a unique positive definite solution $P$. Furthermore, a feedback stabilizing controller law is given by $u=Kx$ with 
\begin{equation}
    \label{eq:linear_control}
    K = - R^{-1}B^\top P.
\end{equation}

We can easily prove following results on the existence of a local quadratic control Lyapunov function for (\ref{eq:sys}). 

\begin{prop}
    Suppose that $f$ is continuously differentiable and $g$ is continuous. Assume that $(A,B)$ is stabilizable and $Q$ is positive definite. There exists some $c>0$ such that 
    $V_P(x)=x^\top Px$, where $P$ is the solution to (\ref{eq:are}), is a control Lyapunov function for (\ref{eq:sys}) on an open set containing $\set{x\in\Real^n:\, V_P(x)\le c}$. 
\end{prop}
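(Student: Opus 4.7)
The plan is to exhibit the linear feedback $u = Kx$ from (\ref{eq:linear_control}) as an explicit witness for the CLF inequality (\ref{eq:clf}) and then localize via a sublevel-set argument. Since $V_P$ is smooth, and positive definite because $P \succ 0$ (which in turn follows from $Q \succ 0$), it will suffice to establish $\nabla V_P(x)\cdot(f(x)+g(x)Kx)<0$ on some punctured neighborhood of the origin; a small enough sublevel set of $V_P$ will then fit inside that neighborhood, and the neighborhood itself can be taken as $D$.

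First I would use the ARE (\ref{eq:are}) to derive the standard closed-loop Lyapunov identity
\begin{equation*}
P(A+BK) + (A+BK)^\top P = -\bigl(Q + PBR^{-1}B^\top P\bigr),
\end{equation*}
which is negative definite because $Q \succ 0$ and $PBR^{-1}B^\top P \succeq 0$. Writing $-M$ for the right-hand side, the bound $x^\top M x \ge \lambda_{\min}(Q)\norm{x}^2$ then controls the contribution of the linearization.

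Next I would peel off the linear parts of the dynamics: write $f(x) = Ax + f_r(x)$ with $\norm{f_r(x)} = o(\norm{x})$ (by continuous differentiability of $f$ together with $f(0)=0$), and $g(x) = B + g_r(x)$ with $\norm{g_r(x)} \to 0$ as $x \to 0$ (by continuity of $g$). Substituting into $\nabla V_P(x)\cdot(f(x)+g(x)Kx) = 2x^\top P(f(x)+g(x)Kx)$ yields
\begin{equation*}
-x^\top M x + 2x^\top P f_r(x) + 2x^\top P g_r(x) K x,
\end{equation*}
and the last two terms are $o(\norm{x}^2)$. A standard continuity argument then produces some $r>0$ such that this expression is at most $-\tfrac12 \lambda_{\min}(Q)\norm{x}^2 < 0$ on the punctured ball $B_r(0)\setminus\set{0}$.

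Finally, since $V_P(x) \ge \lambda_{\min}(P)\norm{x}^2$ with $\lambda_{\min}(P) > 0$, choosing any $c$ with $0 < c < \lambda_{\min}(P)\,r^2$ forces $\set{x:\,V_P(x) \le c} \subset B_r(0)$. Taking $D = B_r(0)$ then gives an open set containing this sublevel set on which $V_P$ is a CLF, witnessed by $u = Kx$. The only mildly delicate bookkeeping is confirming the perturbation terms are genuinely $o(\norm{x}^2)$ rather than merely $O(\norm{x}^2)$, so that the negative definite linearized term dominates near the origin; beyond that, the argument reduces to the standard Riccati-plus-first-order-Taylor calculation.
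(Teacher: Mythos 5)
Your proof is correct and follows essentially the same route as the paper: use the linear Riccati feedback $u=Kx$ as the witness, split $f$ and $g$ into their linearizations plus $o(\norm{x})$ and $o(1)$ remainders, invoke the ARE to get the negative definite quadratic term $-x^\top Qx$ (the paper keeps the extra $-x^\top PBR^{-1}B^\top Px$ as a bonus nonpositive term), and absorb the higher-order perturbations near the origin. Your final step of fitting the sublevel set $\set{x:\,V_P(x)\le c}$ inside the ball via $\lambda_{\min}(P)$ is the same localization the paper leaves implicit.
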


\begin{proof}
    Consider the closed-loop system under the linear control law $u=Kx$ with $K$ from (\ref{eq:linear_control}) and write it as 
    \begin{align*}
    \dot x &= f(x) + g(x) K x \\
    &= (A+BK)x + (f(x)-Ax) + (g(x)-B)Kx.
    \end{align*}
    We can compute that 
    \begin{align*}
        & \nabla V_P \cdot (f(x) + g(x) K x)  = 2x^\top P (A+BKx) \\
        & \qquad + 2x^\top (f(x)-Ax) + 2x^\top P(g(x)-B)Kx \\
        & = x^\top (PA +A^\top P - PBR^{-1} B^\top P +Q) x \\
        &\qquad - x^\top PBR^{-1} B^\top P x - x^\top Qx + o(\norm{x}^2)\\
        & \le - x^\top Qx + o(\norm{x}^2),
    \end{align*}
    where \(o(\norm{x}^2)\) represents higher-order terms than \(\norm{x}^2\). Because \(Q\) is positive definite, we conclude that the CLF condition (\ref{eq:clf}) holds on $\set{x\in\Real^n:\, V_P(x)\le c}$ for \(c > 0\) sufficiently small.
\end{proof}

While the proof shows the existence of a local quadratic control Lyapunov function, we can leverage a formal verifier such as an SMT solver to compute the largest value $c>0$ on which $V_P$ remains a valid CLF. 

We formulate two CLF conditions to be verified:
\begin{itemize}
\item[(1)] \textbf{global CLF condition}:
\begin{equation}
\nabla V_P \cdot g = 0 \land x \in \Real^n\setminus \set{0} \implies  \nabla V_P \cdot f < 0.
\label{eq:global_qclf}
\end{equation}

\item[(2)] \textbf{local CLF condition}: 
\begin{equation}
\nabla V_P \cdot g = 0 \land x \neq 0 \land V_P(x)\le c \implies  \nabla V_P \cdot f < 0.
\label{eq:local_qclf}
\end{equation}

\end{itemize}

Algorithm \ref{alg:qclf} summarizes the verification procedure. It either globally verifies (\ref{eq:global_qclf}) or finds the largest \(c \in (0, c_{\max}]\), for some \(c_{\max} > 0\), such that the local CLF condition (\ref{eq:local_qclf}) holds.

\begin{algorithm}[ht]
\SetAlgoLined
\KwIn{System dynamics $f$, $g$, candidate  CLF $V_P(x)=x^\top Px$, bisection upper bound $c_{\max}$}
\KwOut{Largest \( c_P\in (0,c_{\max}] \) for which the CLF condition (\ref{eq:local_qclf}) holds}

\If{CLF condition (\ref{eq:global_qclf}) holds globally}{
    Return success\;
}
\Else{
    Use bisection to find largest $c_P\in (0,c_{\max}]$ such that the local CLF condition (\ref{eq:local_qclf}) holds\;
}

\Return{$c_P$}\;

\caption{Quadratic CLF Verifier}
\label{alg:qclf}
\end{algorithm}

\begin{rem}
While conceptually simple, we state Algorithm \ref{alg:qclf} separately to emphasize that verification of quadratic CLFs can be highly effective in practice and often outperforms more sophisticated computational approaches for constructing CLFs (see Examples \ref{ex:ex1} and \ref{ex:pendulum}). This is due to the existence of complete algorithms for verifying inequalities (\ref{eq:global_qclf}) and (\ref{eq:local_qclf}). If both $f$ and $g$ are polynomials, the SMT solver Z3 \cite{de2008z3} provides complete and sound procedures for verifying (\ref{eq:global_qclf}) and (\ref{eq:local_qclf}). If $f$ and $g$ are non-polynomial, we can leverage the $\delta$-complete SMT solver dReal \cite{gao2013dreal}. While dReal cannot verify the global CLF condition (\ref{eq:global_qclf}), it is effective in verifying (\ref{eq:local_qclf}) on bounded domains. The solver dReal being $\delta$-complete means it either formally proves (\ref{eq:local_qclf}) or finds a counterexample to a slightly weakened version of the negation of (\ref{eq:local_qclf}) up to a prescribed precision value. 
\end{rem}

\begin{rem}
We implemented Algorithm \ref{alg:qclf} in LyZNet \cite{liu2024lyznet} with support for both Z3 and dReal. Since dReal is only $\delta$-complete, counterexamples near the origin are inevitable. To address this issue, we use the same approach as in \cite{liu2023physics} (also implemented in LyZNet \cite{liu2024lyznet}) to verify that $V_P$ is a valid Lyapunov function on $\{x \in \mathbb{R}^n : V_P(x) \leq c\}$ for the closed-loop system under the linear control $u = Kx$, where $K$ is given by (\ref{eq:linear_control}). Denote the largest verifiable $c$ by $c_P^1$. This implies that the CLF condition (\ref{eq:local_qclf}) holds on $\{x \in \mathbb{R}^n : V_P(x) \leq c_P^1\}$. We can then verify
\[
\nabla V_P \cdot g = 0 \land x \neq 0 \land c_P^1 \leq V_P(x) \leq c \implies \nabla V_P \cdot f < 0
\]
to determine a larger $c_P$ as in Algorithm \ref{alg:qclf}. The above condition can be effectively verified by dReal because it already excludes the region $\{x \in \mathbb{R}^n : V_P(x) \leq c_P^1\}$ containing the origin in its interior. Note that $c_P$ is usually larger than $c_P^1$, because the CLF condition (\ref{eq:clf}) allows the use of any control, not just the linear controller. 
\end{rem}

To conclude this subsection, we note that if a global quadratic CLF is formally verified, then the domain of null-controllability is $\mathcal{D}=\Real^n$. If a local quadratic CLF is formally verified using Algorithm \ref{alg:qclf}, we have \(\{x \in \mathbb{R}^n : V_P(x) \leq c_P\} \subset \mathcal{D}\). In both cases, we can effectively construct a controller, e.g., using Sontag's formula (\ref{eq:sontag}), to achieve asymptotic stabilization on the formally verified region of null-controllability. 

\subsection{Formal verification of neural control Lyapunov functions}

While a quadratic CLF can be simple and effective, there are situations where a more complex CLF can provide a better estimate of the null-controllable set. In fact, the main purpose of this paper is to demonstrate that by solving Zubov's PDE in a principled fashion as detailed in Section \ref{sec:pinn-clf}, we can obtain a near-maximal estimate of the null-controllability set using a verified region of null-controllability in terms of sublevel sets of a neural CLF.

Let $W_N$ denote a candidate neural network CLF. We propose the following conditions to be formally verified:
\begin{itemize}
\item[(1)] \textbf{set containment:} 
\begin{equation}
\nabla W_N \le c_1 \implies V_P \le c_P,
\label{eq:neural_clf1}
\end{equation}
where $V_P$ and $c_P$ are formally verified by Algorithm \ref{alg:qclf}.

\item[(2)] \textbf{CLF condition:}
\begin{align}
\nabla W_N \cdot g = 0 & \land x \neq 0 \land c_1 \le W_N(x)\le c_2 \notag\\
& \implies  \nabla W_N \cdot f < 0.
\label{eq:neural_clf2}
\end{align}
\end{itemize}

Similar to Algorithm \ref{alg:qclf}, we can use bisection to determine the maximal $c_2$ such that the CLF condition (\ref{eq:neural_clf2}) can be formally verified. By doing so, we formally prove that 
\(\{x : W_N(x) \leq c_2\} \subset \mathcal{D}\). A controller can be effectively constructed using Sontag's formula with $V_N$ when $c_1 \leq W_N(x) \leq c_2$ and using $V_P$ when $V_P(x) \leq c_P$. Because of the CLF condition (\ref{eq:neural_clf2}), the closed-loop system, from any initial state in \(\{x : W_N(x) \leq c_2\}\) and under this controller, will reach \( \{x : W_N(x) \leq c_1\} \), which is contained in \( \{x : V_P(x) \leq c_P\} \) by the set containment condition (\ref{eq:neural_clf1}), and be asymptotically stabilized to the origin from \(\{x : V_P(x) \leq c_P\}\) since $V_P$ is a verified CLF on this set.

The verification procedure for a neural CLF is summarized in Algorithm \ref{alg:neural_clf}.

\begin{algorithm}[ht]
\SetAlgoLined
\KwIn{System dynamics \( f, g \), candidate neural CLF \( W_N \), verified quadratic CLF \( V_P \) on \(\{x : V_P(x) \leq c_P\}\), bisection upper bound \( c_{\max} \)}

\KwOut{Largest \( c_1 \in (0,c_{\max}] \) for which the set containment condition (\ref{eq:neural_clf1}) holds and largest \( c_2 \in (c_1, c_{\max}] \) for which the neural CLF condition (\ref{eq:neural_clf2}) holds}

Use bisection to find the largest \( c_1 \in (0, c_{\max}] \) such that the set containment condition (\ref{eq:neural_clf1}) holds\;

Use bisection to find the largest \( c_2 \in (c_1, c_{\max}] \) such that the neural CLF condition (\ref{eq:neural_clf2}) holds\;

\Return{$c_1, c_2$}\;

\caption{Neural CLF Verifier}
\label{alg:neural_clf}
\end{algorithm}

\subsection{Formal verification of near-optimal controller}

Solving the transformed Zubov-HJB equation allows us to compute a near-maximal estimate of the null-controllability set and extract a formally verified near-optimal controller. The following result demonstrates that a near-optimal stabilizing controller can be obtained by solving the Zubov-HJB equation (\ref{eq:zubov_hjb}).

\begin{prop}
Suppose that $W$ solving (\ref{eq:zubov_hjb}) is positive definite and satisfies $\mathcal{D}=\set{x\in\Real^n:\, W(x)<1}$. Then the closed-loop system for (\ref{eq:sys}) with $u=k(x)$ given by (\ref{eq:k}) is asymptotically stable with $\mathcal{D}$ as the domain of attraction. 
\end{prop}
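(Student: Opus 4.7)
The plan is to use $W$ itself as a Lyapunov function for the closed-loop system on $\mathcal{D}$. First I would compute $\dot W$ along trajectories of (\ref{eq:sys}) with $u = k(x)$. Noting that the controller in (\ref{eq:k}) can be rewritten as $k = \hat k/((1-W)\psi(W))$, I substitute this into $\dot W = \nabla W \cdot (f + g k)$, multiply through by the positive factor $(1-W)\psi(W)$, and invoke the Zubov-HJB equation (\ref{eq:zubov_hjb}) to eliminate $\nabla W \cdot (f(1-W)\psi(W) + g\hat k)$. Dividing again by $(1-W)\psi(W)$ yields
\begin{equation*}
\dot W \;=\; -\,q(x)\,(1-W)\psi(W) \;-\; \frac{\hat k^\top R\,\hat k}{(1-W)\psi(W)}.
\end{equation*}
Since $0 \le W < 1$ on $\mathcal{D}$, $\psi > 0$, and $q$ and $R$ are both positive definite, the right-hand side is strictly negative for every $x \in \mathcal{D} \setminus \set{0}$.

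Next I would check that $W$ fulfills the hypotheses of the standard Lyapunov asymptotic stability theorem on $\mathcal{D}$: positive definiteness is assumed, and the strict decrease condition was just derived. The additional ingredient is forward invariance of sublevel sets: because $W(x) \to 1$ as $x$ approaches $\partial\mathcal{D}$ (which follows from the assumed characterization $\mathcal{D} = \set{x\in\Real^n:\,W(x)<1}$ together with continuity of $W$), for any $x_0 \in \mathcal{D}$ and any $c \in (W(x_0),1)$ the sublevel set $\Omega_c := \set{x \in \Real^n:\, W(x) \le c}$ lies entirely in $\mathcal{D}$ and is forward invariant under the closed-loop dynamics. On such $\Omega_c$ the controller $k$ is bounded and smooth, so solutions exist globally in forward time, remain in $\Omega_c$, and converge to the origin by the Lyapunov argument. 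Hence $\mathcal{D}$ is contained in the domain of attraction.

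For the reverse inclusion I would argue directly from the definition of the null-controllability set: if a state $x_0$ lies in the domain of attraction of the closed-loop system, then the signal $u(\cdot) := k(\phi(\cdot;x_0,k))$ is an admissible control under which the trajectory from $x_0$ converges to the origin, so by definition $x_0 \in \mathcal{D}$. Combining the two inclusions yields equality of the domain of attraction with $\mathcal{D}$.

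The main obstacle is technical rather than conceptual. The feedback $k$ contains the factor $1/((1-W)\psi(W))$, which blows up as $W \to 1$, and $W$ need not be differentiable at the origin, so a ``one-shot'' Lyapunov argument uniformly on $\mathcal{D}$ is unavailable. Restricting attention to compact sublevel sets $\Omega_c$ with $c < 1$ sidesteps the boundary singularity, while continuity of $k$ at the origin is handled in the spirit of Sontag's small-control-property construction. Together with the standing assumption that $W$ is a classical (in particular $C^1$) solution of (\ref{eq:zubov_hjb}), this suffices to justify the ODE manipulations and invocation of the Lyapunov theorem above.
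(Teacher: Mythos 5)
Your proposal is correct and follows essentially the same route as the paper: the key step in both is to rearrange the Zubov--HJB equation (\ref{eq:zubov_hjb}) to obtain $\nabla W\cdot(f+gk) = -\hat k^\top R\hat k/((1-W)\psi(W)) - q(1-W)\psi(W)$, which is negative definite on $\mathcal{D}$, so that $W$ serves as a Lyapunov function for the closed-loop system. The additional details you supply (forward invariance of compact sublevel sets $\set{W\le c}$ with $c<1$, and the reverse inclusion showing the domain of attraction cannot exceed $\mathcal{D}$) are left implicit in the paper's one-line proof but are correct and worthwhile.
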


\begin{proof}
    We prove that $W$ is a Lyapunov function for the closed-loop system on $\mathcal{D}$. In fact, by re-arranging (\ref{eq:zubov_hjb}), we have
    \begin{align}
        \nabla W (f+gk) &= -\frac{\hat{k}^\top R\hat k}{(1-W)\psi(W)} - Q(1-W)\psi(W) \notag\\
        &\le -Q(1-W)\psi(W),\label{eq:lyap_W}
    \end{align}
    which is negative definite on $\mathcal{D}$.
\end{proof}

\begin{rem}
As noted in Section \ref{sec:pinn-hjb}, the premise of solving the Zubov-HJB equation (\ref{eq:zubov_hjb}) using neural networks is that we can arbitrarily approximate the true solution $W$ to (\ref{eq:zubov_hjb}) with a neural network $W_N$. To this end, theoretical guarantees for solving the Zubov-HJB equation (\ref{eq:zubov_hjb}) with PINNs are of significant interest. We leave this as future work. Nonetheless, we provide a heuristic analysis of why the closed-loop system can be made asymptotically stable despite the use of neural network solutions. Let $W_N$ be a neural network solution to (\ref{eq:zubov_hjb}). Define
\begin{equation}
\label{eq:neural_control}
k_N = -\frac{1}{2(1-W)\psi(W)} R^{-1} g^\top \nabla W^\top.
\end{equation}
Since we know $k(0)=0$ and $\nabla k(0) = K$, it is expected that $k_N(0)\approx 0$, $\nabla k_N(0) \approx K$, and $k_N\approx k$ on any compact domain on which $W_N\approx W$ and $\nabla W_N \approx \nabla W$. Now we can simply subtract $k_N(0)$ from $k_N$, still denoted by $k_N$ for simplicity, and have $k_N(0)=0$, $\nabla k_N(0)\approx K$, and $k_N\approx k$ elsewhere. As a result, the origin remains an equilibrium of the closed-loop system under controller $k_N$, and the quadratic Lyapunov function $V_P$ remains valid for local stability analysis. Furthermore, away from the origin, (\ref{eq:lyap_W}) ensures that $W_N$ remains a valid Lyapunov function for the closed-loop system under controller $k_N$.
\end{rem}

Consider the closed-loop system 
\begin{equation}\label{eq:closed-loop-neural}
\dot x = f(x) + g(x)k_N,
\end{equation}
where $K_N$ is given by (\ref{eq:neural_control}). 
Since this is an autonomous system, the verification of asymptotic stability is similar to that in \cite{liu2023physics} (implemented in the tool LyZNet \cite{liu2024lyznet}). The difference is that \( W_N \) is obtained by solving the Zubov-HJB equation instead of the Zubov equation.










\section{Numerical examples}\label{sec:examples}

In this section, we present numerical examples to illustrate the main algorithms presented in this paper. All numerical examples were carried out using LyZNet \cite{liu2024lyznet}, a Python tool designed for learning and verifying neural network Lyapunov functions and regions of attraction. The code for these examples is available in the ``examples/pinn-clf'' directory of the repository: \url{https://git.uwaterloo.ca/hybrid-systems-lab/lyznet/}. All training and verification tasks were performed on a 2020 MacBook Pro with a 2 GHz QuadCore Intel Core i5, without any GPU. 

\begin{ex}\label{ex:ex1}
    Consider the following Van der Pol equation with input:
    \begin{align*}
        \dot x_1 = x_2,\quad \dot x_2 = -x_1 + x_2(1 - x_1^2) + u,
    \end{align*}
    and the mass-spring system \cite{jarvis2003lyapunov}: 
        \begin{align*}
        \dot x_1 & = x_2,\quad \\
        \dot x_2 & = -(x_1 + \frac{1}{10}x_1^3) + (x_3 - x_1) - \frac{1}{10}(x_4 - x_2) + u,\\
        \dot x_3 & = x_4,\quad \dot x_4 = -(x_3 - x_1) + \frac{1}{10}(x_4 - x_2).
    \end{align*}
    For both examples, we can verify that the quadratic value function produced by solving the Riccati equation (\ref{eq:are}) with $Q=R=I$ (identity matrix) is a global CLF using Algorithm \ref{alg:qclf}, and the SMT solver Z3 \cite{de2008z3} can verify the global CLF condition within approximately 20 milliseconds. We further tested scalability of verifying the quadratic CLF by increasing the number of masses in the mass-spring system, where the first spring, which connects the first mass to a fixed wall, stiffens and generates a nonlinear force based on displacement. The rest of the chain is connected with a linear spring and a negative damper between masses. For $N=6$, which defines a 12-dimensional system, we can successfully verify a global quadratic CLF within a second. 
\end{ex}

\begin{ex}\label{ex:pendulum}
    Consider the inverted pendulum
    $$
    \dot x_1 = x_2,\quad \dot x_2 = \frac{g_c}{\ell}\sin(x_1) - \frac{b}{m\ell^2}x_2 + \frac{u}{m\ell^2},
    $$
   where the parameters $g_c = 9.81$, $b = 0.1$, $\ell = 0.5$, and $m = 0.15$ are taken from the example in \cite{rego2022learning}. The authors of \cite{rego2022learning} trained a neural network with 1,545 trainable parameters as a local CLF. Using Algorithm \ref{alg:qclf}, we can show that the quadratic CLF is verifiably global. Although Z3 does not support trigonometric functions, we implemented rigorous polynomial bounds \cite{bagul2022generalized} for common trigonometric functions in LyZNet \cite{liu2024lyznet} and were able to verify that the quadratic value function obtained by solving the Riccati equation (\ref{eq:are}) with $Q = R = I$ is a global CLF in 10 milliseconds.
\end{ex}

\begin{ex}\label{ex:vdp}
Consider the reversed Van der Pol equation with control from \cite{doban2015feedback}:
\begin{equation}
\label{eq:reverse_vdp}
\dot x_1 = -x_2, \quad \dot x_2 = x_1 + (1+u)(x_1^2 - 1)x_2.
\end{equation}
It can be shown that this system does not admit a global quadratic CLF. In \cite{doban2015feedback}, the authors used rational CLFs to investigate the feedback stabilization of this system. We use LyZNet to compute a quadratic CLF \( V_P \) using (\ref{eq:are}) with \( Q = R = I \) and verify it using Algorithm \ref{alg:qclf}. A comparison of the largest verified level set \( \set{x : V_P(x) \leq c_P} \) using Algorithm \ref{alg:qclf}, a rational Lyapunov function from \cite{doban2015feedback}, and a sixth-degree sum-of-squares (SOS) CLF using the procedure from \cite{jarvis2003lyapunov} is shown in Figure \ref{fig:compare_qclf_data}. We observe that the quadratic CLF verified by an SMT solver is comparable to the rational CLF and outperforms the SOS CLF.

We next demonstrate that by solving the Zubov-HJB (\ref{eq:zubov_hjb}) using neural networks and formally verifying them as CLFs, we can significantly expand the estimate of the null-controllability set. Following Algorithm \ref{alg:pmp}, we first solve the TPBVP (\ref{eq:tpbvp}) from 3,000 random initial conditions within the domain $[-4,4]$. We set $T=200$, $N=2,000$, and $\text{tol}=1e-5$. The cost function is set as $q(x)=x^TQx$ and $Q=R=I$. We use \texttt{solve\_bvp} from the Python library \texttt{SciPy} to solve it. We successfully solved it from 1,254 initial conditions, and the values of the transformed value function $W=\beta(V)$, with $\beta(s)=\tanh(\alpha s)$ and $\alpha=0.1$, are used as data points in training the neural network solution $W_N$ to (\ref{eq:zubov_hjb}) using the loss (\ref{eq:lossV}). The data points are visualized in right panel of Figure \ref{fig:compare_qclf_data}. We did not use any additional boundary conditions. We set the neural network to a standard feedforward neural network with hyperbolic tangent activation functions. It has two hidden layers, and each layer uses 30 hidden units. We train the neural network with $300,000$ interior collocation points, randomly sampled from a larger domain $[-8,8]$, for 20 epochs with a batch size of 32. We then formally verify the neural CLF $W_N$ using Algorithm \ref{alg:neural_clf} and the SMT solver dReal \cite{gao2013dreal}. The formally verified level set, along with the learned neural CLF, is depicted in Figure \ref{fig:neural_clf}. It can be seen that the neural CLF significantly outperforms the verified quadratic CLF, which is already on par with alternative approaches, as shown in left panel of Figure \ref{fig:compare_qclf_data}. 

A natural question to ask is whether the physics-informed PDE loss (\ref{eq:lossV}) brings any benefits, as PMP data potentially already captures the optimal value function sufficiently well. To draw a comparison, we train a neural network entirely based on the PMP data points and formally verify it using Algorithm \ref{alg:neural_clf}. We then demonstrate that adding the PDE loss, even when the PMP data is generated on a smaller domain $[-4,4]$, can enhance the training of a neural network solution to the PDE (\ref{eq:zubov_hjb}) on a much larger domain $[-18,18]$. In contrast, fitting a neural network with data alone fails to extrapolate to the larger domain. The comparison is depicted in Figure \ref{fig:data_zubov_compare}. We note that effectively generating data beyond the domain $[-4,4]$ seems numerically challenging. 

Finally, we demonstrate the benefit of solving the Zubov-HJB in achieving a near-optimal controller while maintaining a near-maximal formally verified estimate of the null-controllability set. We form the closed-loop system (\ref{eq:closed-loop-neural}) using the neural controller (\ref{eq:neural_control}) derived from the neural CLF \(W_N\) and formally verify its stabilizing properties using LyZNet. We verified that the level set \(\{x: W_N(x) \leq c\}\) for \(c = 0.8317\) is a region of attraction for the closed-loop system, only slightly smaller than the largest \(c = 0.8334\), where the CLF condition (\ref{eq:neural_clf2}) holds for \(W_N\). We compare the accumulated cost of the Sontag controller (\ref{eq:sontag}) and the HJB controller (\ref{eq:neural_control}) in Figure \ref{fig:cost}, with the HJB controller showing reduced accumulated cost as expected.
\end{ex}

\begin{figure}[ht]
    \centering
    \includegraphics[height=0.33\linewidth]{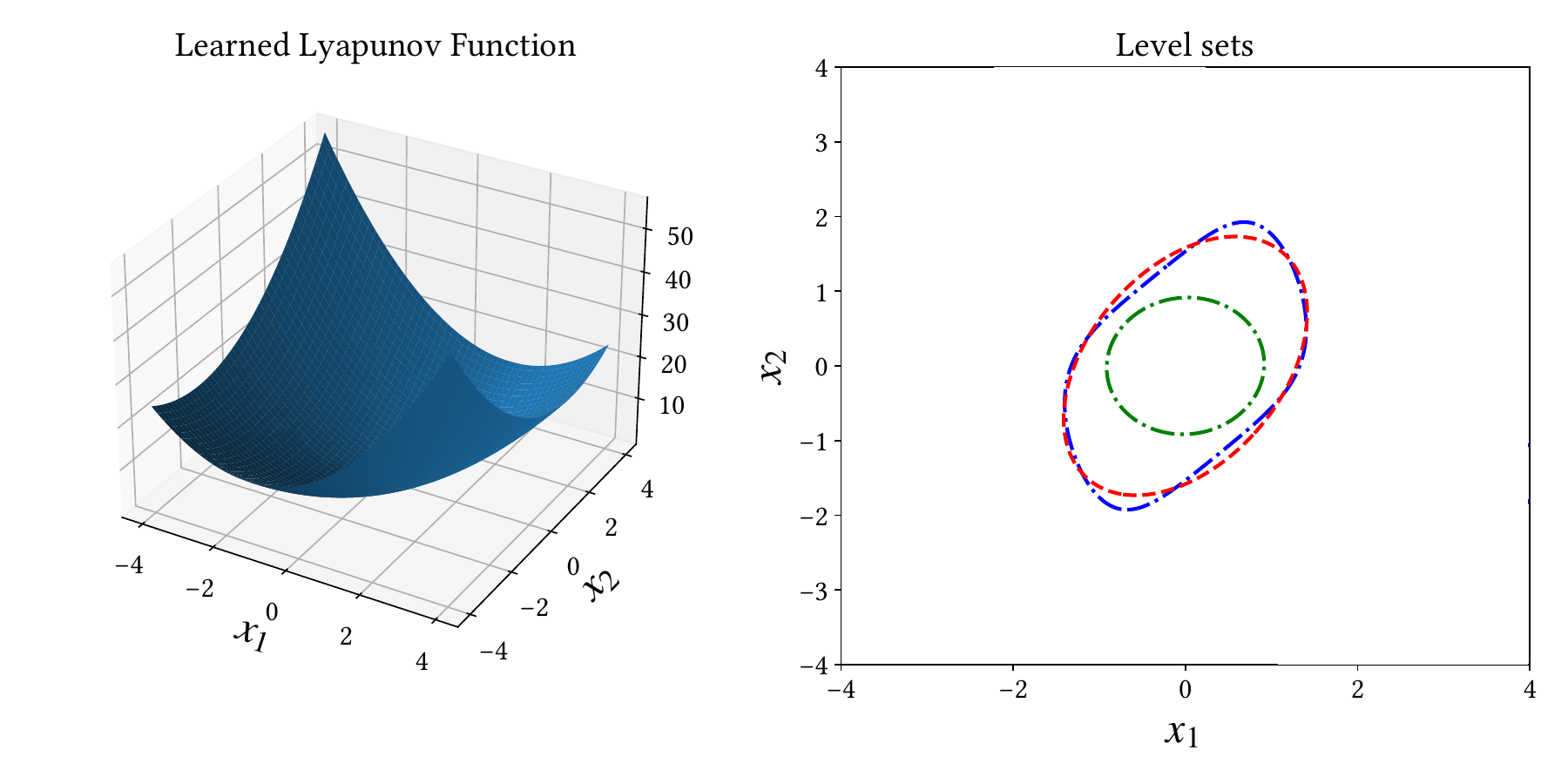}
    \includegraphics[height=0.36\linewidth]{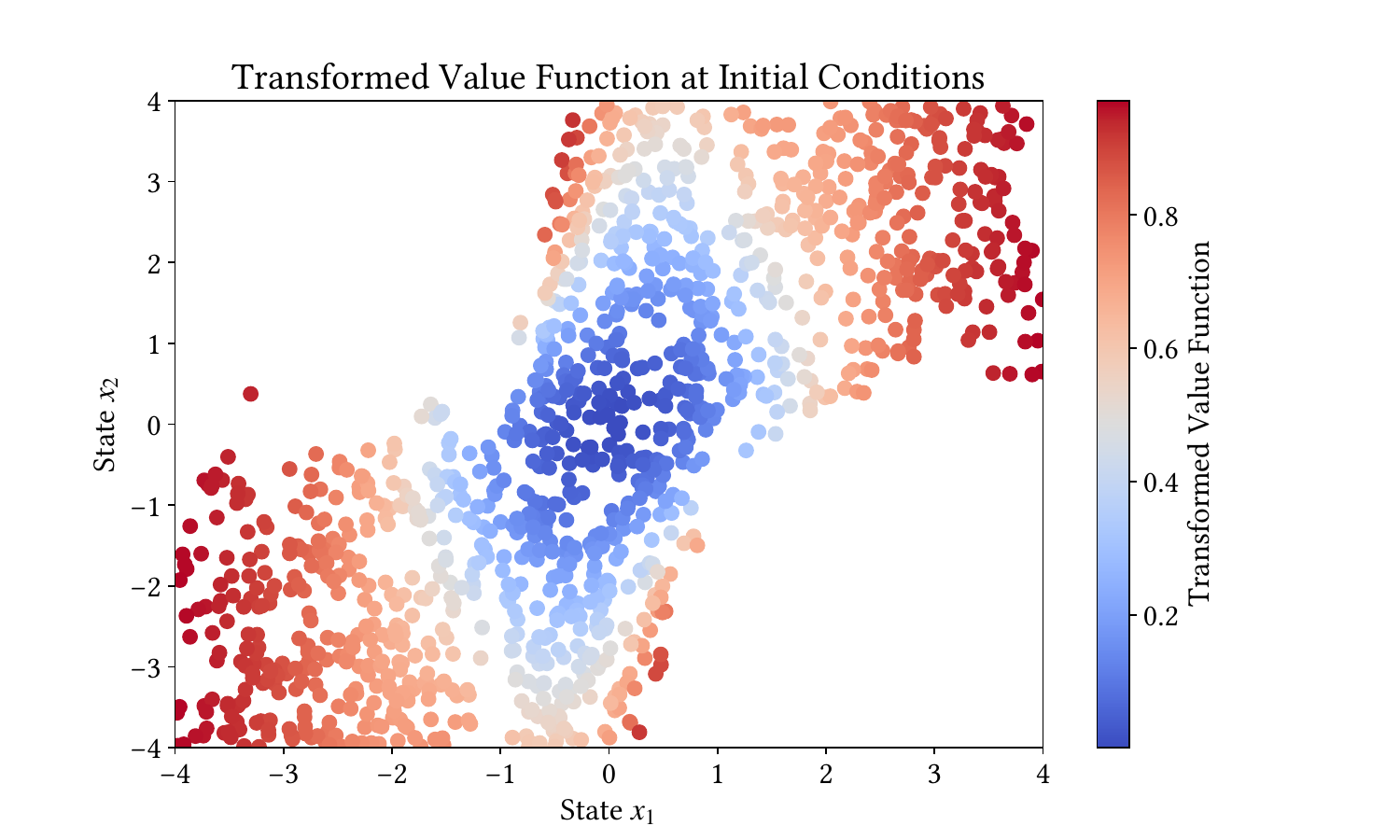}
    \caption{(\textbf{Left}) Comparison of the formally verified null-controllable set provided by a quadratic CLF using Algorithm \ref{alg:qclf} (dashed red), a rational CLF from \cite{doban2015feedback} (dashed blue), and a sum-of-squares (SOS) CLF with a sixth-degree polynomial (dashed green). It is observed that the quadratic CLF verified by an SMT solver is already comparable to the rational CLF and outperforms the SOS CLF. (\textbf{Right}) Data obtained by solving the TBPVP (\ref{eq:tpbvp}) for (\ref{eq:reverse_vdp}) in Example \ref{ex:vdp} on the domain $[-4,4]$. Solving is successful for 1,524 out of 3,000 randomly sampled initial conditions.}
    \label{fig:compare_qclf_data}
\end{figure}


\begin{figure}[ht]
    \centering
    \includegraphics[width=\linewidth]{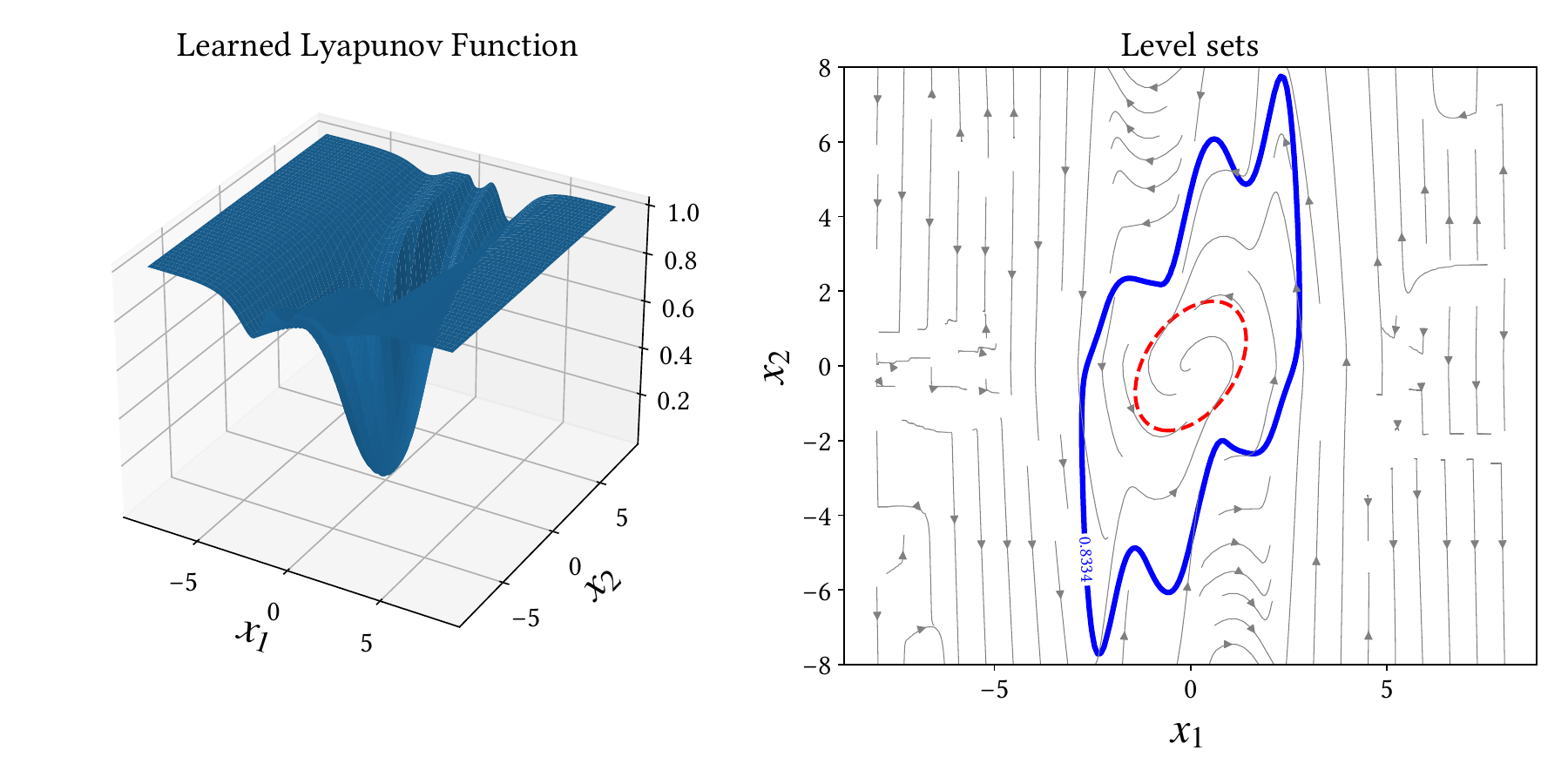}
    \caption{A formally verified neural CLF, obtained by solving the Zubov-HJB (\ref{eq:zubov_hjb}) with a neural network of two hidden layers, each with 30 neurons, can significantly outperform alternative approaches. The solid blue line represents the formally verified level set of the neural CLF, while the dashed red line represents that of a quadratic CLF (see left of Figure \ref{fig:compare_qclf_data} for comparisons). The phase portrait is for the closed-loop system under Sontag's controller (\ref{eq:sontag}) using the learned neural CLF.}
    \label{fig:neural_clf}
\end{figure}

\begin{figure}[ht]
    \centering
    \includegraphics[width=\linewidth]{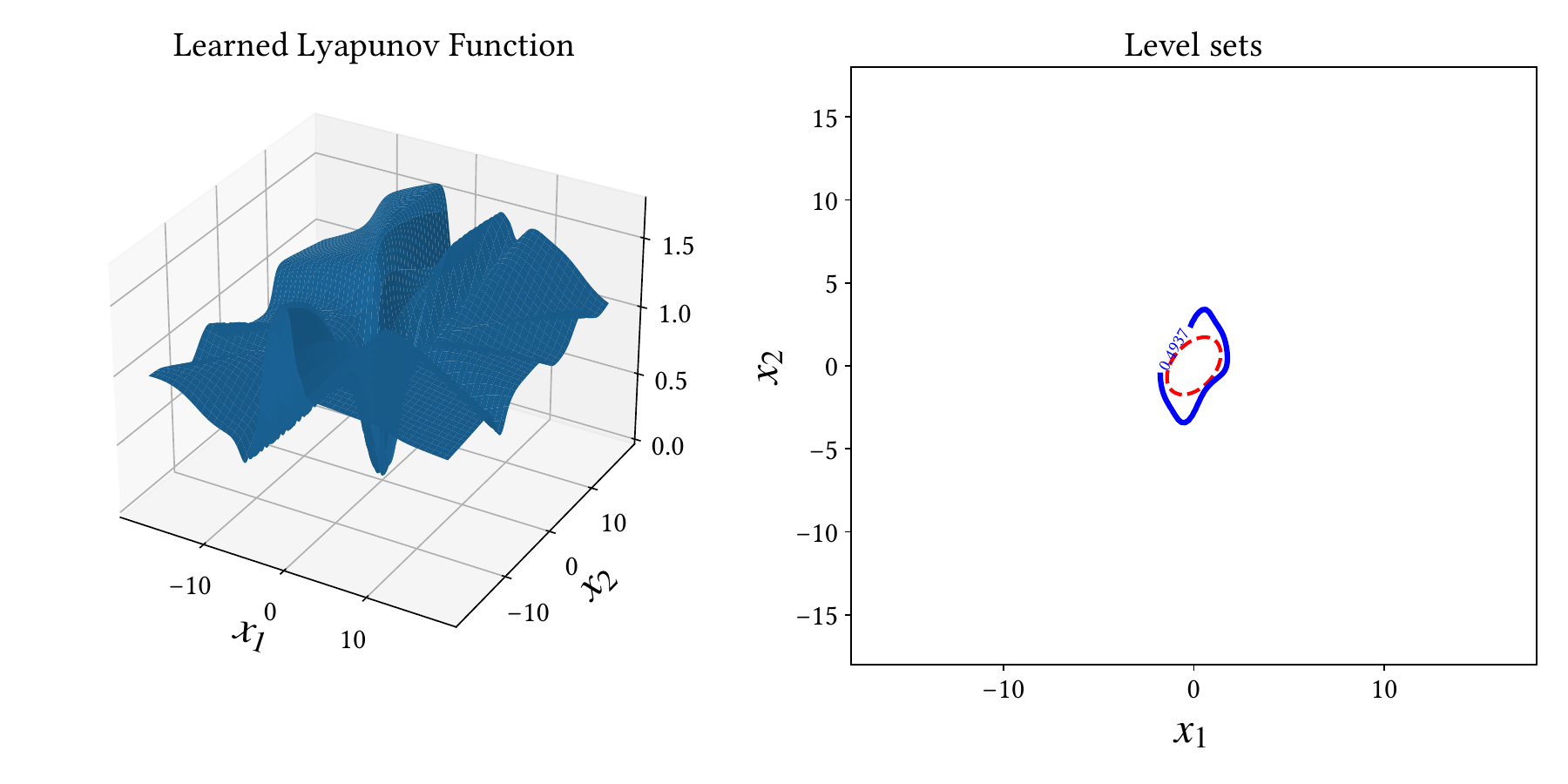}    \\
    \includegraphics[width=\linewidth]{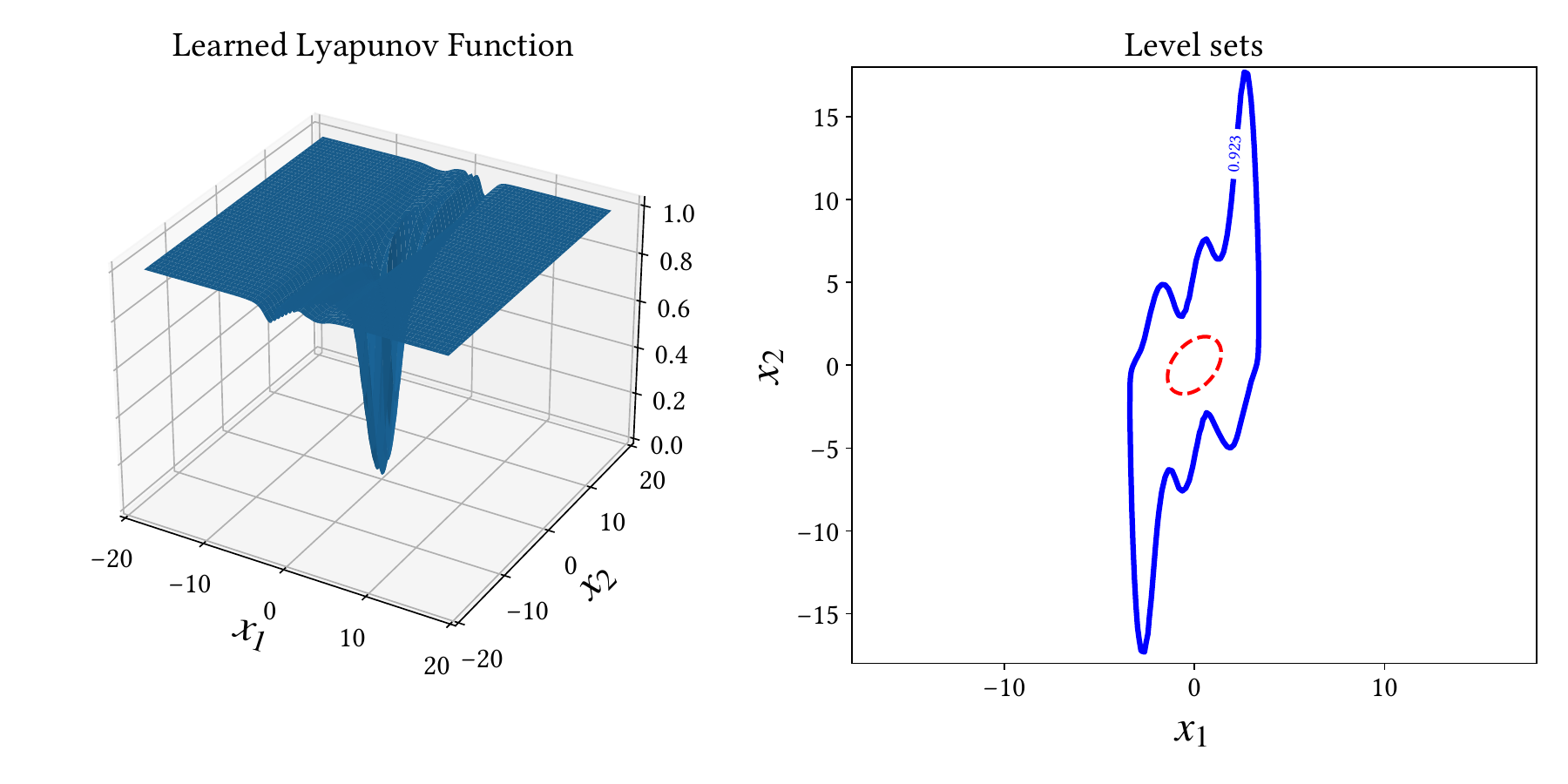}
    \caption{A comparison of the neural network CLF obtained with PMP data alone and with the physics-informed PDE loss (\ref{eq:lossV}). Clearly, the PDE loss significantly enhances extrapolation on larger domains beyond where the data were taken. \textbf{Top:} verified neural CLF trained with PMP data alone. \textbf{Bottom:} verified neural CLF trained with the physics-informed PDE loss (\ref{eq:lossV}).}
    \label{fig:data_zubov_compare}
\end{figure}

\begin{figure}[ht]
    \centering
    \includegraphics[width=0.45\linewidth]{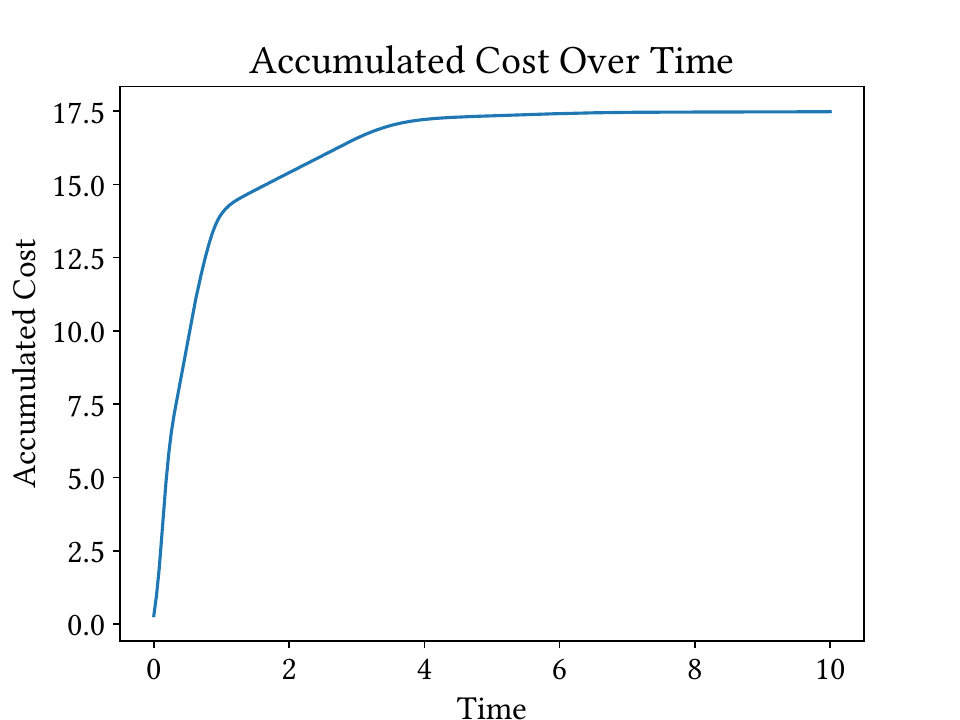}    
    \includegraphics[width=0.45\linewidth]{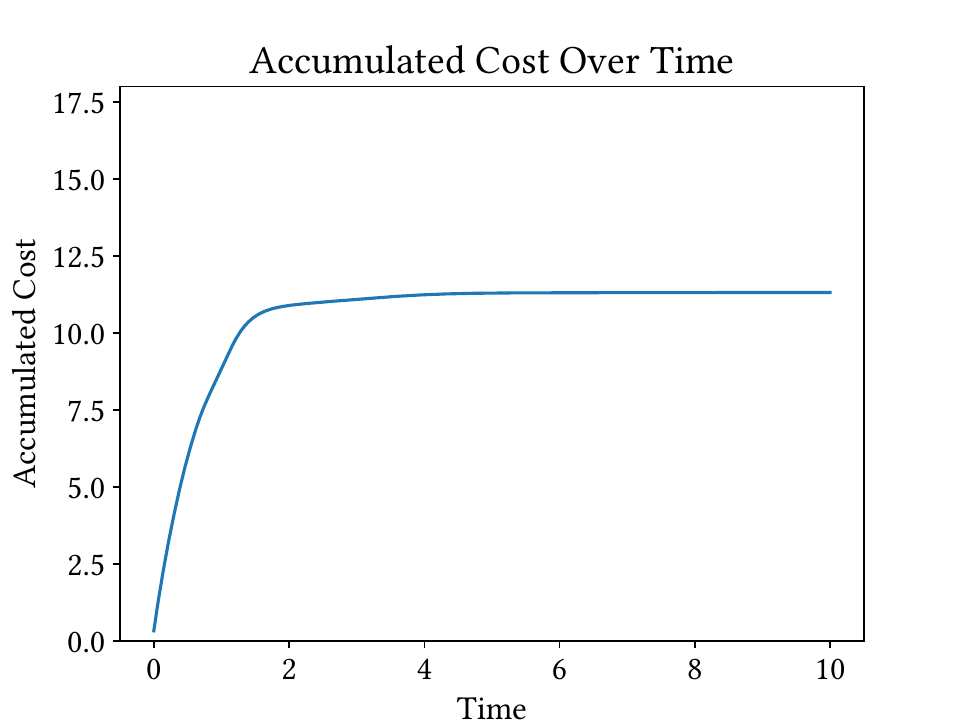}
    \caption{A comparison of the accumulated cost of the Sontag controller (\ref{eq:sontag}) and the neural HJB controller (\ref{eq:hjb}) from the same neural CLF/value function $W_N$. The initial condition is set to be $(-8/3,-8/3)$. It is clear that the HJB controller has a lower accumulated cost, as it is expected to be near optimal by solving the Zubov-HJB equation (\ref{eq:zubov_hjb}).}
    \label{fig:cost}
\end{figure}

\section{Conclusions}

We presented a method for computing and formally verifying neural network-based control Lyapunov functions using physics-informed learning. By solving a transformed Hamilton-Jacobi-Bellman equation, termed the Zubov-HJB equation, the neural CLFs were shown to significantly outperform traditional approaches such as sum-of-squares and rational CLFs in estimating the null-controllability set for nonlinear systems. Additionally, we demonstrated that the neural CLFs can be formally verified using SMT solvers. By approximately solving the optimal value functions, our approach also allows for the derivation of near-optimal controllers with formal guarantees of stability.

Future work will focus on extending this framework to handle bounded control inputs, state constraints, and more complex systems using compositional verification techniques or vector CLFs. We will also investigate robust CLF formulations to address uncertainties and external disturbances in practical control applications.

\bibliographystyle{plain}
\bibliography{acc25}

\end{document}